\newcommand{\df}{\mathrel{\mathop:}=}
\newcommand{\RNum}[1]{\uppercase\expandafter{\romannumeral #1\relax}}
\newtheorem{theo}{Theorem}
\newtheorem{rem}{Remark}
\newtheorem{lem}{Lemma}
\newcommand{\ie}{i\/.\/e\/.,\/~}
\newcommand{\eg}{e\/.\/g\/.,\/~}
\newcommand{\cf}{cf\/.\/~}
\newcommand{\fig}{Fig\/.\/~}
\newcommand{\sect}{Sec\/.\/~}
\newcommand{\mytitle}{\textbf{Accepted final version.}
To appear in \textit{Proc. of the 2018 American Control Conference (ACC)}.}
\title{\LARGE \bf
Event-triggered Learning for Resource-efficient Networked Control
}
\author{Friedrich Solowjow$^{1,2}$, Dominik Baumann$^{1}$, Jochen Garcke$^{2,3}$, Sebastian
Trimpe$^{1}$
\thanks{$^{1}$Intelligent Control Systems Group, Max Planck Institute for Intelligent Systems, 70569 Stuttgart, Germany. Email: \{fsolowjow, dbaumann, strimpe\}@tuebingen.mpg.de}%
\thanks{$^{2}$Inst. for Numerical Simulation, University of Bonn, 53115 Bonn, Germany.}%
\thanks{$^{3}$Fraunhofer SCAI, Schloss Birlinghoven, 53754 St. Augustin, Germany.}%
\thanks{This work was supported in part by the German Research Foundation (DFG)
Priority Program 1914 (grant TR 1433/1-1), the Cyber Valley Initiative, and the Max Planck Society.}
}
\begin{document}
	\maketitle
	\thispagestyle{fancy}
	\pagestyle{empty}

	\begin{abstract}
Common event-triggered state estimation (ETSE) algorithms save communication in networked control systems by predicting agents' behavior, 
and transmitting updates only when the predictions deviate significantly.  
The effectiveness in reducing communication thus heavily depends on the quality of the dynamics models used to predict the agents' states or measurements.
Event-triggered learning is proposed herein as a novel concept to 
further reduce communication:
whenever poor communication performance is detected, an identification experiment is triggered and an improved prediction model learned from data. 
Effective learning triggers are obtained by comparing the actual communication rate with the one that is expected based on the current model.  By analyzing statistical properties of the inter-communication times and leveraging powerful convergence results, the proposed trigger is proven to limit learning experiments to the necessary instants.
Numerical and physical experiments demonstrate that event-triggered learning improves robustness toward changing environments and yields lower communication rates than common ETSE.


\end{abstract}

	\section{Introduction}
Networked control systems (NCSs) are rapidly gaining in popularity, both in academia and industry.
Advancements in control strategies and network technologies enable the systems to closely interact with their environment and share data. 
Treating communication as a shared resource, as suggested in \cite{HespanhaNaghshtabriziXuJan07}, is an important step to scale NCSs to problems involving many agents.

In this paper, we consider NCSs with multiple spatially distributed agents, whose dynamics are independent, but that are coupled through a joint control objective and communicate via a shared network.
Figure \ref{multiagent} depicts two agents representative for one communication link in such an NCS. 
While communication between agents is beneficial or even necessary for coordination (\eg formation control \cite{cao2013overview}, or multi-agent balancing \cite{TrDAn12b}), the network constitutes a shared and scarce resource and, hence, its usage shall be limited.

Event-triggered state estimation (ETSE) \cite{YoTiSo02,TrDAn11,sijs2012event,wu2013event, TAC14b_web}
has been proposed to reliably exchange sensor or state data between agents, but with limited inter-agent communication.
Many ETSE methods utilize dynamics models to predict other agents' states or measurements (see \fig \ref{multiagent}), in order to anticipate their behavior without the need for continuous data transmissions.
Event triggering rules then ensure that an update is sent whenever the prediction is not good enough.  Hence, the capability of the dynamics model in making accurate predictions critically determines how much communication can be saved.

In order to improve prediction accuracy, we propose to combine ETSE with model learning.  We develop a framework, where an agent uses its input-output data to update 
the model of its dynamics,
and communicates this model to other agents.
 With improved models, it is possible to achieve superior prediction accuracy and thus lower communication even further compared to standard ETSE, especially when facing changing dynamics. 
Since learning and communicating models can be resource-intensive operations themselves, 
we develop triggering rules that quantify the model accuracy and decide when to learn.  The result is an \emph{event-triggered learning} scheme, which sits on top of the standard ETSE framework  (\cf \fig \ref{multiagent}).

\begin{figure}[t]
	\begin{center}
		\includegraphics[width=0.48\textwidth]{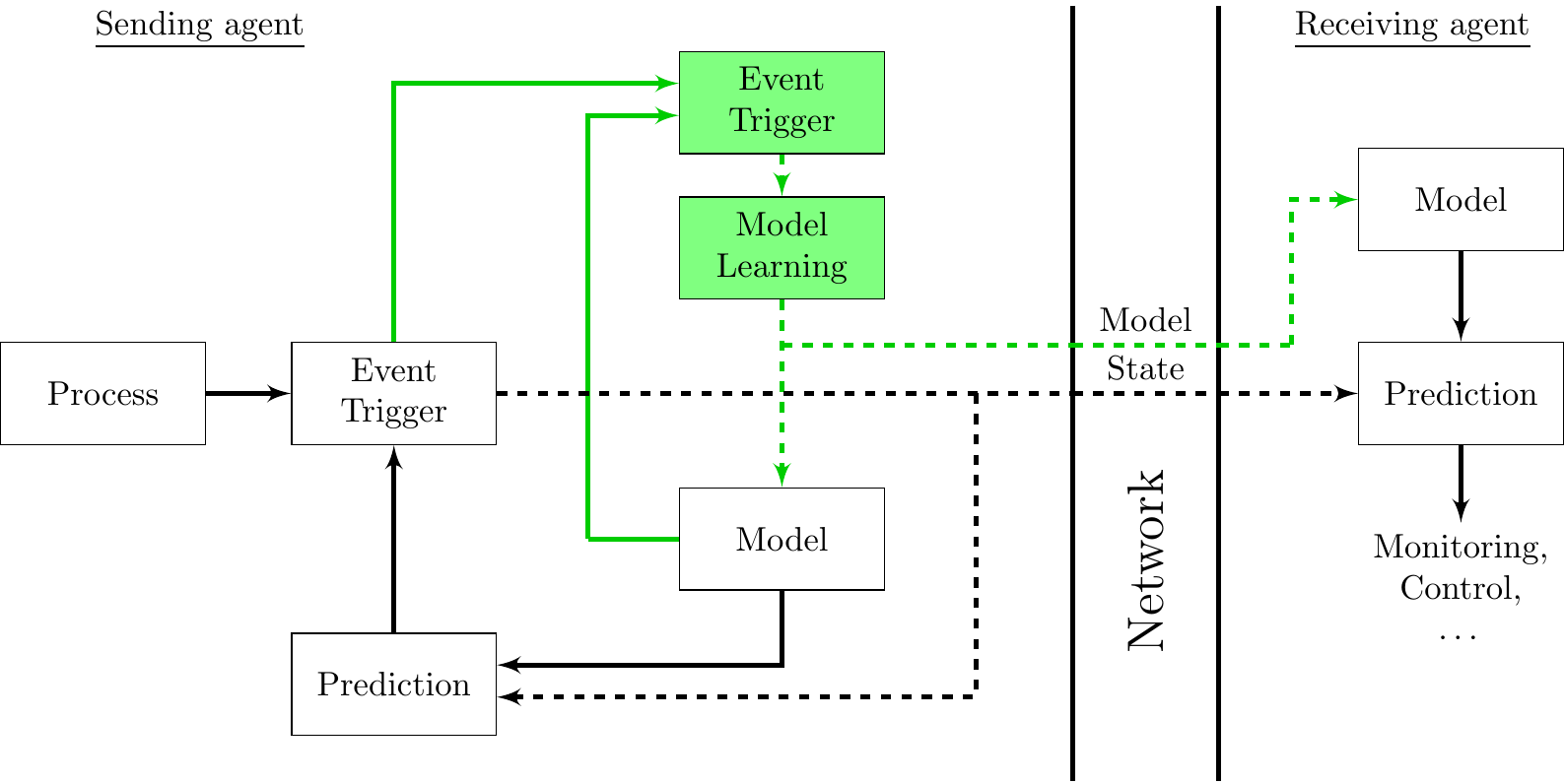}
		\vspace{-4mm}
		\caption{Proposed event-triggered learning architecture. Based on a typical event-triggered state estimation architecture (in black), we propose model learning (in green) to improve predictions and thus lower communication between Sending and Receiving agents. Learning experiments are triggered by comparing empirical with expected inter-communication times.}
		\label{multiagent}
	\end{center}
\vspace{-3mm}
\end{figure}
\subsubsection*{Contributions}
In detail, this paper makes the following contributions:
\begin{itemize}
	\item introducing the novel idea of event-triggered learning;
	\item derivation of a data-driven learning trigger based on statistical properties of inter-communication times;
	\item probabilistic guarantees ensuring the effectiveness of the proposed learning triggers;
	\item concrete implementation of event-triggered learning for linear Gaussian dynamic systems; and
	\item demonstration of improved communication behavior in numerical simulations and hardware experiments on a cart-pole system.
\end{itemize}

\subsubsection*{Related work}
Various event-triggering approaches have been proposed for improving resource usage in state estimation; for recent overviews see, \eg \cite{shi2015event,trimpeEBCCSP15,SiNoLaHa16,tr17} and references therein.  
Approaches differ, among others, in the type of models that are used for predictions.  
The send-on-delta protocol \cite{miskowicz2006send} assumes a trivial constant model by triggering when the difference between the current and lastly communicated value passes a threshold.  This protocol is extended to linear predictions in \cite{Su07}, which are obtained from approximating the signal derivative from data.  More elaborate protocols use dynamics models of the observed process, which typically leads to more effective triggering \cite{SiKeNo14,trimpeEBCCSP15}.  The vast majority of ETSE methods (\eg \cite{YoTiSo02,TrDAn11,sijs2012event,wu2013event, TAC14b_web,shi2015event,trimpeEBCCSP15,SiNoLaHa16,tr17,MuTr18,SiKeNo14}) employ linear dynamics models, which we also consider herein. Nonlinear prediction models are used in \cite{TrBu15,MaEsGaSa15}, for example. None of these works considers model learning to improve prediction accuracy, as we do herein. While we implement event-triggered learning for linear Gaussian problems, the general idea also applies to other types of estimation problems and dynamics models.

In order to obtain effective learning triggers, we take a probabilistic view on inter-communication times (\ie the time between two communication events) and trigger learning experiments whenever the expected communication differs from the empirical.  A similar interpretation of inter-communication times is considered in \cite{XuHe04}, where it is used to infer stability results. We extend this idea with statistical convergence results to design the event trigger for learning.  


Deciding if learning is necessary is related to the problem of change detection, which seeks to identify times when the distribution governing a random process changes.  Change detection has received considerable attention in statistical literature; see \cite{lai1995sequential} for an overview. 
While one main application of this work
is fault detection,
we use the proposed trigger to initiate learning experiments to capture the changed distribution.
Furthermore, we do not analyze the stochastic process directly, but instead target inter-communication times, which are a natural one-dimensional feature in the ETSE framework and also amenable to theoretical analysis, which is incorporated in the trigger design.
Applications of change detection in the context of NCSs focus on fault detection in the presence of delays \cite{wang2008new}, or network design and performance in general \cite{haghighi2013dynamic}.
Iterative learning control has been proposed for multi-agent problems in \cite{zhang2016event} to improve event-triggered control for repetitive tasks, however,
the idea of triggering learning when there is significant change is new to the best of the authors' knowledge.


\section{Event-triggered Learning}
In this section, we explain the main idea of event-triggered learning using the schematic in \fig \ref{multiagent}. The figure depicts a canonical problem, where one agent (`Sending agent' on the left) has information that is relevant for another agent at a different location (`Receiving agent'). This setting is representative for remote monitoring scenarios or two agents of a multi-agent network, for instance. For resource-efficient communication, a standard ETSE architecture
is used (shown in black). The main contribution of this work is to incorporate learning into the ETSE architecture. By designing an event trigger also for model learning (in green), learning tasks are performed only when necessary.
We next explain the core components of the proposed framework.

The sending agent in 
\fig \ref{multiagent}
monitors the state of a dynamic process (either
directly measured or obtained via state estimation) and can transmit this state
information to the remote agent via a network link. In order to save network
resources, an event-based protocol is used. The receiving agent makes 
model-based predictions (`Model' and 'Prediction') to anticipate the state at times
of no communication.  The sending agent implements a copy of the same prediction
and compares it to the current state in the `Event Trigger', which triggers a communication whenever the prediction deviates too much from the actual state.
This general scheme is standard in ETSE literature (see \cite{shi2015event,trimpeEBCCSP15,SiNoLaHa16,tr17} and references therein). 
The effectiveness of this scheme in reducing communication will generally depend on the accuracy of the prediction, and thus the quality of the model.

The key idea of this work is to trigger learning experiments and learn improved models from data when prediction performance of the current model is poor.
The updated model is then shared with the remote agent to improve its predictions. Because performing a learning task is costly itself (\eg involving computation and communication resources, as well as possibly causing deviation from the control objective),
we 
propose event-triggering rules also for model learning.

While the idea of using event triggering to save communication in estimation or control is quite common by now, this work proposes event triggering also on a higher level. Triggering of learning tasks yields improved prediction models,
which are the basis for ETSE at the lower level.

The general idea of event-triggered learning 
applies to diverse scenarios.  In the following, we make the idea concrete
for ETSE of linear Gaussian systems (introduced in Sec. III). For this case, model learning can be solved with standard least-squares estimation (Sec. IV). We propose a trigger design that is based on probabilistic analysis of inter-communication times (Sec. V and VI). By means of numerical examples and physical experiments on a cart-pole system (Sec. VII and VIII), we show that communication can effectively be reduced through event-triggered learning.

	\section{Linear Gaussian Problem}
\label{sec:problem}
In this section, we make the problem of event-triggered learning precise for
linear Gaussian dynamic systems. 
For this, we introduce all standard elements of the ETSE architecture shown in \fig \ref{multiagent} (black).
\subsection{Process}
We assume the linear dynamics (`Process' in \fig \ref{multiagent})
\begin{equation}
	x(k+1) = A x(k) + Bu(k) + \epsilon(k),
	\label{LTI}
\end{equation}
with discrete-time index $k$, state $x(k) \in\mathbb{R}^n$, control input $u(k) \in \mathbb{R}^q$, system matrices $A \in \mathbb{R}^{n\times n}$, $B \in \mathbb{R}^{n\times q}$, and independent identically distributed (i.i.d.) Gaussian random variables $\epsilon(k) \sim \mathcal{N}(0, \Sigma)$.  For simplicity, we assume the state $x(k)$ can be measured, but the framework readily extends to the case where the state is locally estimated.

The system $(A,B)$ is assumed stabilizable. Hence, stable closed loop dynamics can be achieved through state feedback
\begin{equation}
 u(k) = Fx(k) + r(k),
 \label{eq:stateFeedback}
\end{equation}
where $r(k)$ is a known reference. 
The reference can be used for tracking problems or to excite the system, which is particularly important during learning experiments and will be further discussed later.
The closed loop dynamics then reads
\begin{equation}
	x(k+1) = A_{\mathrm{CL}} x(k) + Br(k) + \epsilon(k),
	\label{CL}
\end{equation}
with stable matrix $A_{\mathrm{CL}} \df (A+BF)$.

\subsection{Predictions}
\label{sec:ETSE_pred}
The `Prediction' block, which is implemented on the sending and the receiving agent (\cf \fig \ref{multiagent}), utilizes a model $(\hat{A},\hat{B})$ to predict the true process. 
After initializing $\hat{x}(0) = x(0)$, both agents iterate
\begin{equation}
 \hat{x}(k+1) = \hat{A}_{\mathrm{CL}} \hat{x}(k) + \hat{B} r(k).
 \label{OLPred}
\end{equation}
The prediction \eqref{OLPred} is deterministic and deviates from the true value \eqref{CL} over time.
We bound this error through an event trigger: whenever a certain error threshold is reached, the actual state $x(k+1)$ is transmitted and the prediction reset
\begin{equation}
\hat{x}(k+1) = \begin{cases} \hat{A}_{\mathrm{CL}} \hat{x}(k) + \hat{B} r(k) &\mbox{\text{if} } \gamma_{\text{state}}=0\\
															 x(k+1)	&\mbox{\text{if }}  \gamma_{\text{state}}=1 \end{cases},
\label{predwithcom}
\end{equation}
where the binary variable $\gamma_{\text{state}}$ denotes positive ($\gamma_{\text{state}}=1$) or negative triggering decisions.


\subsection{Event Trigger}
\label{sec:ETSE_trig}
The `Event Trigger' on the sending agent has access to both, prediction and true state. It can thus track the error 
\begin{equation}
	\|z(k)\|_2 \df \| x(k) - \hat{x}(k) \|_2
	\label{predictionerror}
\end{equation}
and trigger a communication when it becomes too large:
\begin{equation}
\gamma_{\text{state}} = 1 \iff \|z(k)\|_2 \geq \delta.
\label{eq:est_trigger}
\end{equation}
%
This way, communication is reduced to the necessary instants given by a significant prediction error.

\subsection{Problem Formulation}
Precise models $(\hat{A},\hat{B},\hat{\Sigma})$ are key to reduce communication rates.
In this work, we address mismatches between model $(\hat{A},\hat{B},\hat{\Sigma})$ and true process $(A, B, \Sigma)$, which may stem from imprecise initial models or changing dynamics, for example.  
The development of a model learning framework to improve the effectiveness of ETSE in reducing communication is the main objective of this paper.
This includes, in particular, a decision rule (the \emph{learning trigger}  $\gamma_{\text{learn}}$) for deciding when a new model $(\hat{A},\hat{B},\hat{\Sigma})$ ought to be learned.

	\section{Model Learning}
\label{sec:modelLearning}
This section addresses learning of the dynamics model~\eqref{OLPred} from input-output data of the system \eqref{CL} ('Model Learning' in \fig \ref{multiagent}). In the numerical and physical experiments of \sect \ref{sec:simulation} and \ref{sec:experiment}, the data is obtained from dedicated learning experiments.  In other settings, data could be recorded during normal operation.  We here present standard least squares to estimate the system matrices, but any other technique for linear system identification \cite{ljung1999system} could be used.
 
Rewriting \eqref{CL} as
\begin{equation}
	x(k+1) = \begin{bmatrix} A_{\mathrm{CL}}  & B \end{bmatrix} 
	\begin{bmatrix} x(k) \\ r(k) \end{bmatrix} + \epsilon(k)
\end{equation}
and stacking $M$ samples yields 
\begin{equation}
	Y = \begin{bmatrix} A_{\mathrm{CL}} & B \end{bmatrix} X + \begin{bmatrix}
\epsilon(0) &\ldots &\epsilon(M-1)  
\end{bmatrix},
\end{equation}
with
\begin{equation}
X \df \begin{bmatrix}
x(0) \! & \! \ldots \! & \! x(M-1) \\
r(0) \! & \! \ldots \! & \! r(M-1)
\end{bmatrix},~
Y \df \begin{bmatrix}
x(1) \! & \! \ldots \! & \! x(M)
\end{bmatrix} 
\end{equation}

Thus, the ordinary least squares (OLS) estimator for the model matrices yields
$
	[\hat{A}_{\mathrm{CL}} ~\hat{B}] = YX^\intercal (X X^\intercal)^{-1}.
$

%

Due to the auto-regressive structure of the system \eqref{CL}, we can ensure identifiability for certain types of input signals $r$. Sinusoidal input signals with increasing frequencies (chirp) are known to yield invertible matrices $X X^\intercal$ (see condition of persistent excitation, \eg \cite{ljung1999system}). 
We thus use chirp signals as reference $r$ to excite the system when a learning experiment shall be performed.
Since the estimator is designed to achieve minimal variance it is straightforward to obtain $\hat{\Sigma}$ by averaging over the squared residuals.

The specific learning method only affects the 'Model Learning' block in \fig \ref{multiagent} and has no influence on the decision whether learning is necessary ('Event Trigger'). Hence, the general event-triggered learning approach is agnostic to what learning or identification method is used.
	\section{Foundations in Stochastic Analysis}
In this section, we briefly discuss theoretical results from stochastic analysis as background for deriving the event trigger for model learning in the next section.
More detailed treatments of stochastic processes and stochastic differential equations (SDEs) can be found in \cite{oksendal2003stochastic} and \cite{Sc09}, for example.

Inter-communication times (\ie the time between two triggering events \eqref{eq:est_trigger}) will play a key role in the proposed triggering scheme for model learning.
In the following, we express inter-communication times as random variables (stopping times) and deduce statistical properties directly from the model (in \sect \ref{sec:stoppingTimes}).  To be amenable to stopping time analysis, we shall first describe the event-triggered estimation scheme in terms of continuous-time SDEs (\sect \ref{sec:OUprocess}).

The analysis in this section is done under the assumption that the prediction model \eqref{OLPred} is perfect (\ie $\hat{A}_\mathrm{CL} = A_\mathrm{CL}$, $\hat{B} = B$, and $\hat{\Sigma}=\Sigma$), which will be motivated in the next section.


\subsection{Ornstein-Uhlenbeck processes}
\label{sec:OUprocess}
Consider \eqref{CL} with $r = 0$, \ie
\begin{equation}
x(k+1) = A_{\mathrm{CL}}x(k) + \epsilon(k) .
\label{disctimesys}
\end{equation} 
Because of the perfect model assumption, the reference signal $r(k)$ cancels out in the later analysis step \eqref{ctstimepred} and is hence irrelevant.
Transforming system \eqref{disctimesys} to continuous time yields an Ornstein-Uhlenbeck (OU) process
\begin{equation} 
	\mathrm{d}X(t) = \mathcal{A} X(t) \mathrm{d}t+ Q \mathrm{d}W(t),
	\label{OU}
\end{equation}
with $t \in \mathbb{R}$, $X(t) \in \mathbb{R}^n$, and $W(t)\in \mathbb{R}^n$ a multidimensional Wiener process. The matrices $\mathcal{A}$, $Q \in \mathbb{R}^{n\times n}$ can be computed from $A_{\mathrm{CL}}$, the sampling time of the discrete-time system \eqref{disctimesys}, and the covariance $\Sigma$ (refer to \cite{Sc09,aastrom2012introduction} for details).
Continuous- and discrete-time processes coincide in the sense that they have the same distribution for any given sampling time.

The OU process \eqref{OU} can be written explicitly in integrated form as
\begin{equation}
 X(t) = e^{\mathcal{A}t}X(0) + \int_0^t e^{\mathcal{A}(t-s)}Q \, \mathrm{d}W(s).
\label{intOU}
\end{equation}
With the perfect model assumption,
the discrete-time predictions \eqref{OLPred} transform to 
\begin{equation} 
\hat{X}(t) = e^{\mathcal{A}t}X(0)
\label{ctspred}
\end{equation}
in continuous time.
Using \eqref{intOU} and \eqref{ctspred}, we thus obtain for the continuous-time prediction error 
\begin{equation}
	Z(t) =  X(t) - \hat{X}(t)  = \int_0^t e^{\mathcal{A}(t-s)}Q \mathrm{d}W(s).
	\label{ctstimepred}
\end{equation}
By comparing to \eqref{intOU}, it follows that $Z(t)$ is an OU process starting in zero. We will further analyze the behavior of $\|Z(t)\|_2$ and make a connection between stopping times of this process and communication behavior.

\subsection{Stopping Times}
\label{sec:stoppingTimes}
Due to the existence and uniqueness result for SDEs \cite{oksendal2003stochastic}, continuity of sample paths is assured for OU processes. 
We will leverage this property to pinpoint the exact moment the process crosses a given threshold.
Exactly like in the discrete-time setting, state information is communicated whenever $\|Z(t)\|_2 \geq \delta$,
which resets $Z(t)$ to $0$.
Accordingly, the stopping time $\tau$ is defined as the first time when the stochastic process $Z(t)$ exits an $n$-dimensional sphere with radius $\delta$:
\begin{equation}
\tau \df \inf \{ t : \| Z(t) \|_2 \geq \delta \}.
\end{equation}
The random variable $\tau$ hence corresponds to the inter-communication time, the time between communication events.
Because they coincide in the setting herein, `inter-communication time' and `stopping time' will be used synonymously hereafter.
In \sect \ref{sec:eventTriggerLearning}, learning triggers will be proposed that are based on a comparison of empirical and expected inter-communication times. The computation of expected inter-communication times $\mathbb{E}[\tau|Z(0)\!=\!0]$ is discussed next.

For certain classes of SDEs, it is possible to compute $\mathbb{E}[\tau|Z(0)\!=\!x]$ as the solution $v(x)$ to specific partial differential equations (PDEs).
The following lemma states this result 
for the OU process.
\begin{lem}
Assume the boundary value problem
\begin{equation}
\begin{alignedat}{2}
\frac{\partial v(x)}{\partial x}\mathcal{A} x + \frac{1}{2}\mathrm{trace}\left[Q^{\intercal} \frac{\partial^2 v(x)}{\partial x^2} Q  \right] &= -1 \quad &&\mbox{in } \Omega,\\
\label{PDE}v(x) &= 0  &&\mbox{on } \partial\Omega,
\end{alignedat}
\end{equation}
with gradient $\frac{\partial v(x)}{\partial x}$, Hessian $\frac{\partial^2 v(x)}{\partial x^2}$, and $\Omega$ an $n$-dimensional sphere with radius $\delta$, 
has a unique bounded solution $v(x)$. 
Then, $\mathbb{E}[\tau|Z(0)\!=\!0] = v(0)$.
\end{lem}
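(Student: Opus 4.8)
The plan is to recognize that the second-order differential operator on the left-hand side of \eqref{PDE} is precisely the infinitesimal generator $\mathcal{L}$ of the prediction-error process $Z(t)$ from \eqref{ctstimepred}, and then to invoke Dynkin's formula. Since $Z$ solves $\mathrm{d}Z(t) = \mathcal{A}Z(t)\,\mathrm{d}t + Q\,\mathrm{d}W(t)$ with $Z(0)=0$, its generator acting on $C^2$ functions is $\mathcal{L}v(x) = \frac{\partial v(x)}{\partial x}\mathcal{A}x + \frac{1}{2}\mathrm{trace}\!\left[Q^{\intercal}\frac{\partial^2 v(x)}{\partial x^2}Q\right]$, so the hypothesis says exactly that $v$ is a bounded $C^2$ function with $\mathcal{L}v\equiv -1$ on $\Omega$ and $v\equiv 0$ on $\partial\Omega$. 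The identity $\mathbb{E}[\tau\mid Z(0)=0]=v(0)$ will then follow from applying Itô's formula to $v(Z(\cdot))$, stopping at $\tau$, and letting the time horizon go to infinity.

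Concretely, I would first apply Itô's formula to $t\mapsto v(Z(t))$ on $[0,t\wedge\tau]$. Because $v$ is bounded and $C^2$ on the bounded domain $\Omega$, the gradient $\frac{\partial v}{\partial x}$ (hence $Q^{\intercal}(\partial v/\partial x)^{\intercal}$) is bounded on $\overline\Omega$, so the stochastic integral $\int_0^{t\wedge\tau}\frac{\partial v}{\partial x}(Z(s))\,Q\,\mathrm{d}W(s)$ is a true martingale and vanishes in expectation. Using $\mathcal{L}v=-1$ and $Z(s)\in\Omega$ for $s<\tau$, taking expectations gives
\[
\mathbb{E}[v(Z(t\wedge\tau))] = v(0) + \mathbb{E}\!\left[\int_0^{t\wedge\tau}\mathcal{L}v(Z(s))\,\mathrm{d}s\right] = v(0) - \mathbb{E}[t\wedge\tau].
\]
Then I would let $t\to\infty$. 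Boundedness of $v$ already forces $\mathbb{E}[t\wedge\tau]=v(0)-\mathbb{E}[v(Z(t\wedge\tau))]\le 2\sup|v|$ for every $t$, so by monotone convergence $\mathbb{E}[\tau]<\infty$, and in particular $\tau<\infty$ almost surely. (Alternatively, non-degeneracy of the noise, $QQ^{\intercal}\succ 0$, together with boundedness of $\Omega$, gives finite — indeed exponential — moments of the exit time directly.) Continuity of sample paths of the OU process (\sect \ref{sec:stoppingTimes}) yields $Z(t\wedge\tau)\to Z(\tau)\in\partial\Omega$ pathwise, so dominated convergence (again using boundedness of $v$) gives $\mathbb{E}[v(Z(t\wedge\tau))]\to\mathbb{E}[v(Z(\tau))]=0$ since $v=0$ on $\partial\Omega$. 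Combining with $\mathbb{E}[t\wedge\tau]\to\mathbb{E}[\tau]$ from monotone convergence, the displayed identity becomes $0=v(0)-\mathbb{E}[\tau]$, i.e. $\mathbb{E}[\tau\mid Z(0)=0]=v(0)$.

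The main obstacle is the justification of the limit $t\to\infty$: one must know that $t\wedge\tau\to\tau$ both pathwise and in $L^1$ (which requires $\tau<\infty$ a.s. and $\mathbb{E}[\tau]<\infty$), and one must know that the Itô integral evaluated at the stopped time $t\wedge\tau$ has zero mean (i.e. it is a genuine martingale, not merely a local one). Both points hinge on the regularity built into the hypothesis — boundedness of $v$ and of its gradient on $\overline\Omega$ — together with non-degeneracy of the diffusion; everything else is a routine application of Dynkin's formula. Note that uniqueness of the bounded solution, also assumed, is not actually used in deriving the identity, but it guarantees that the value $v(0)$ is unambiguously defined.
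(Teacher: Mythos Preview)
Your argument is correct: identifying the left-hand side of \eqref{PDE} as the generator of the OU error process and then running Dynkin's formula with the stopped time $t\wedge\tau$ is exactly the right machinery, and your handling of the two limit issues (martingale property of the stopped It\^o integral via boundedness of $\nabla v$ on $\overline{\Omega}$, and $\mathbb{E}[\tau]<\infty$ extracted from boundedness of $v$ before passing to the limit) is clean.

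The paper, by contrast, does not carry out any of this: its proof is a one-line appeal to the Andronov--Vitt--Pontryagin formula from \cite{Sc09}, specialized to the OU process as in \cite{XuHe04}. What you have written is essentially a self-contained derivation of that very formula in the present setting. So the approaches are not really different in substance---Dynkin's formula is the engine behind Andronov--Vitt--Pontryagin---but yours is a full proof whereas the paper's is a citation. The benefit of your route is that it makes transparent exactly which properties of $v$ are used (boundedness for the $L^1$ limit, $C^2$-regularity up to $\partial\Omega$ for the martingale property), points the paper leaves implicit in the phrase ``unique bounded solution''. The benefit of the paper's route is brevity, appropriate for a conference paper where this lemma is a tool rather than a contribution.
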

\begin{proof}
The result is obtained by using the more general Andronov-Vitt-Pontryagin formula from \cite{Sc09} and adapting it to the OU process (as was also done in \cite{XuHe04}).
\end{proof}

The one-dimensional case can be solved analytically (see Example 4.2 on page 111 in \cite{Sc09}) and gives interesting insights. 
From the solution, dependencies between $\mathbb{E}[\tau|Z(0)\!=\!x]$ and the parameters $\mathcal{A}$ and $Q$ can be seen. The magnitude of stochastic effects $Q$ is pushing the process out of the domain, while the stable part $\mathcal{A}$ drives it back to zero. Hence, more stable $\mathcal{A}$ leads to longer stopping times, while larger $Q$ leads to shorter.

For general dimension $n$, one could attempt solving the PDE \eqref{PDE}, which is, however, challenging because it is nonlinear and possibly high dimensional.
Typically, one has to resort to numerical solutions, which usually yield the whole function $v(x)$ and are computationally intensive. 
Since we actually only require $v(0)$, an alternative is to approximate $\mathbb{E}[\tau|Z(0)\!=\!0]$ through Monte Carlo simulations, which we use in the experiments herein.  
Because the error $Z(t)$ is always reset to $0$ at triggering instants for the scenario herein, we shall omit the dependence on the initial value and write  $\mathbb{E}[\tau]$ instead of $\mathbb{E}[\tau|Z(0)\!=\!0]$ in the following.

	\section{Event Trigger for Model Learning}
\label{sec:eventTriggerLearning}
In this section, we design the event trigger for model learning (green 'Event Trigger' block in \fig \ref{multiagent}).
We first discuss the general idea of how to trigger learning experiments, and then state two concrete instances.

\subsection{General idea}
The learning trigger must reliably detect when the prediction accuracy of the current model is poor, and thus learning a new model from data promises improved predictions.
We cannot directly compare the model $(\hat{A}_\mathrm{CL}, \hat{B}, \hat{\Sigma})$ to the true process $(A_\mathrm{CL}, B, \Sigma)$ because it is unknown. 
Owing to the stochasticity of the process, it is also not straightforward to quantify model accuracy from data. 
Since we ultimately care about the communication behavior that results from the models, we propose to utilize inter-communication times to trigger learning in the following way.

If an accurate model $(\hat{A}_\mathrm{CL}, \hat{B}, \hat{\Sigma})$ is used, average inter-communication times are expected to be equal to $\mathbb{E}[\tau]$ (as computed according to \sect \ref{sec:stoppingTimes} under the assumption $(\hat{A}_\mathrm{CL}, \hat{B}, \hat{\Sigma}) = (A_\mathrm{CL}, B, \Sigma)$).  If, however, observed inter-communication times deviate on average from $\mathbb{E}[\tau]$, we conclude an inaccurate model and trigger a learning experiment ($\gamma_{\mathrm{learn}}\! =\! 1$).  Hence, the proposed learning trigger compares empirically observed inter-communication times with the expected inter-communication time computed from the current model, and triggers learning when the two are significantly apart.


\subsection{Exact learning trigger}
Based on the foregoing discussion, we propose the following learning trigger:
\begin{equation}
\gamma_{\mathrm{learn}} = 1 \iff	\left| \frac{1}{N}\sum\limits_{i=1}^N \tau_i - \mathbb{E}[\tau] \right| \geq \kappa_{\mathrm{exact}},
	\label{trigger}
\end{equation}
where $\gamma_{\mathrm{learn}} = 1$ indicates that a new model shall be learned; $\mathbb{E}[\tau]$ is computed according to \sect \ref{sec:stoppingTimes} assuming a perfect model; and $\tau_1, \tau_2, \dots, \tau_N$ are the last $N$ empirically observed inter-communication times ($\tau_i$ the duration between two triggers \eqref{eq:est_trigger}). The moving horizon $N$ is chosen to yield robust triggers.
The threshold parameter $\kappa_{\mathrm{exact}}$ quantifies the error we are willing to tolerate. 
We denote \eqref{trigger} as the \emph{exact learning trigger} because it involves the exact expected value $\mathbb{E}[\tau]$, as opposed to the trigger derived in the next subsection.


Even though the trigger \eqref{trigger} is meant to detect inaccurate models, there is always a chance that the trigger fires not due to an inaccurate model, but instead due to the randomness of the process (and thus randomness of inter-communication times $\tau_i$).  
Even for a perfect model, \eqref{trigger} may trigger at some point.  This is inevitable due to the stochastic nature of the problem.  
Therefore, we propose to chose $\kappa_{\mathrm{exact}}$ to yield effective triggering with a user-defined confidence level.  For this, we make use of Hoeffding's inequality:
%
\begin{lem}[Hoeffding's inequality \cite{von2008statistical}]
Let $\tau_1,\ldots,\tau_N$ be i.i.d.\ bounded random variables with $\tau_i \in [0,\tau_{\mathrm{max}}]$ for all $i$. Then
\end{lem}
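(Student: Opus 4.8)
The plan is to establish Hoeffding's inequality in its standard two-sided form, namely
\[
\mathbb{P}\!\left(\left| \tfrac{1}{N}\textstyle\sum_{i=1}^N \tau_i - \mathbb{E}[\tau] \right| \geq \kappa \right) \;\leq\; 2\exp\!\left(-\frac{2N\kappa^2}{\tau_{\mathrm{max}}^2}\right),
\]
by the classical Chernoff-bounding technique, i.e.\ by controlling exponential moments of the centered sum and then invoking Markov's inequality. First I would introduce the centered variables $\bar\tau_i \df \tau_i - \mathbb{E}[\tau]$, which are i.i.d., have zero mean, and take values in an interval of length $\tau_{\mathrm{max}}$. For any $s>0$, applying Markov's inequality to the nonnegative random variable $\exp\!\big(s\sum_{i=1}^N\bar\tau_i\big)$ and then using independence yields
\[
\mathbb{P}\!\left( \tfrac{1}{N}\textstyle\sum_{i=1}^N \tau_i - \mathbb{E}[\tau] \geq \kappa \right) \;\leq\; e^{-sN\kappa}\,\mathbb{E}\!\left[ e^{s\sum_{i=1}^N \bar\tau_i} \right] \;=\; e^{-sN\kappa}\prod_{i=1}^N \mathbb{E}\!\left[ e^{s\bar\tau_i} \right].
\]

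The key technical ingredient is Hoeffding's lemma: for a zero-mean random variable $Y$ supported in an interval of length $L$, one has $\mathbb{E}[e^{sY}] \leq e^{s^2 L^2/8}$ for every $s\in\mathbb{R}$. I would prove this by convexity of the exponential — bounding $e^{sy}$ on the interval by the chord through its endpoints — so that $\mathbb{E}[e^{sY}]$ is dominated by a function $e^{\psi(s)}$, and then analyzing $\psi$ via a second-order Taylor expansion; its second derivative is the variance of a suitable two-point distribution and is therefore at most $L^2/4$, which after integrating gives $\psi(s)\leq s^2L^2/8$. Applying this with $L=\tau_{\mathrm{max}}$ to each factor gives $\mathbb{E}[e^{s\bar\tau_i}]\leq e^{s^2\tau_{\mathrm{max}}^2/8}$, so the bound above becomes $\exp\!\big(-sN\kappa + N s^2\tau_{\mathrm{max}}^2/8\big)$.

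It then remains to optimize over $s>0$: the exponent is quadratic in $s$ and minimized at $s^\star = 4\kappa/\tau_{\mathrm{max}}^2$, producing the one-sided bound $\exp\!\big(-2N\kappa^2/\tau_{\mathrm{max}}^2\big)$. Repeating the argument with $-\bar\tau_i$ in place of $\bar\tau_i$ bounds the lower-tail event identically, and a union bound over the two one-sided events delivers the claimed factor of $2$. The main obstacle is the proof of Hoeffding's lemma — the Chernoff reduction, the independence split, and the one-dimensional minimization over $s$ are all routine bookkeeping. In the context of this paper the inequality would then be specialized to $\kappa = \kappa_{\mathrm{exact}}$ in order to calibrate the learning trigger \eqref{trigger} to a user-specified confidence level, with the understanding that the i.i.d.\ hypothesis holds exactly under the perfect-model assumption together with the reset of $Z(t)$ to $0$ at each triggering instant.
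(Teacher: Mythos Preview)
Your argument is the standard and correct derivation of Hoeffding's inequality via the Chernoff method and Hoeffding's lemma, and nothing in it is wrong. However, there is nothing to compare it against: the paper does not prove this lemma at all. It is stated as a cited result from \cite{von2008statistical} and used as a black box; the only ``proof'' content in the paper appears downstream, in Theorems~\ref{thm:exactTrigger} and~\ref{thm:approxTrigger}, where the inequality is simply invoked with specific choices of $\kappa$. So you have supplied substantially more than the paper does --- a complete self-contained proof --- whereas the paper treats the inequality as a known tool and moves directly to its application.
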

\begin{equation} 
	\mathbb{P}\left[\left| \frac{1}{N}\sum\limits_{i=1}^N \tau_i -\mathbb{E}[\tau] \right|\geq \kappa \right] \leq 2e^{-\frac{2N\kappa^2}{\tau^2_{\mathrm{max}}}}.
\end{equation}

The application of Hoeffding's inequality requires inter-communication times $\tau_i$ to be upper bounded by some $\tau_{\mathrm{max}}$.  This is easily enforced in practice by triggering at the latest when $\tau_{\mathrm{max}}$ is reached.  Moreover, we specify a maximal probability $\eta$ that we are willing to tolerate for the difference being caused through randomness.  We then have the following result for the trigger \eqref{trigger}:
%
%
\begin{theo}[Exact learning trigger] 
\label{thm:exactTrigger}
Let the parameters $\eta$, $N$, and $\tau_{\mathrm{max}}$ be given, and assume a perfect model  $(\hat{A}_\mathrm{CL}, \hat{B}, \hat{\Sigma}) = (A_\mathrm{CL}, B, \Sigma)$. 
For
\begin{equation}
\kappa_{\mathrm{exact}} = \tau_{\mathrm{max}}\sqrt{-\frac{1}{2N} \ln \frac{\eta}{2} },
\label{eq:kappa_exact}
\end{equation}
we have
\begin{equation}
	\mathbb{P}\left[\left| \frac{1}{N}\sum\limits_{i=1}^N \tau_i -\mathbb{E}[\tau] \right|\geq \kappa_{\mathrm{exact}} \right] \leq \eta.
\end{equation}
\end{theo}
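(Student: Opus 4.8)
The plan is to derive Theorem~\ref{thm:exactTrigger} as an immediate corollary of Hoeffding's inequality, so the real content is (i) checking that its hypotheses hold in the present setting and (ii) picking the threshold that drives the right-hand side of the bound down to the prescribed confidence level $\eta$.

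First I would establish that, under the perfect-model assumption $(\hat{A}_{\mathrm{CL}},\hat B,\hat\Sigma)=(A_{\mathrm{CL}},B,\Sigma)$, the observed inter-communication times $\tau_1,\dots,\tau_N$ are i.i.d.\ with common mean $\mathbb{E}[\tau]$. Each $\tau_i$ is the first exit time of the error process from the $\delta$-sphere, and by \eqref{predwithcom}--\eqref{eq:est_trigger} the error is reset to $0$ at every triggering instant; since the continuous-time error \eqref{ctstimepred} is an OU process started at $0$, the (strong) Markov property implies that after each reset the process is an independent copy of the same OU process. Hence the successive stopping times are independent and identically distributed, and the value $\mathbb{E}[\tau]$ used in the trigger \eqref{trigger}, which by \sect\ref{sec:stoppingTimes} is computed precisely under this perfect-model assumption, is exactly their common mean. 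Boundedness $\tau_i\in[0,\tau_{\mathrm{max}}]$ is not automatic but is enforced by the implementation (triggering at the latest at $\tau_{\mathrm{max}}$), as assumed in the statement preceding the theorem. I would also remark that $\kappa_{\mathrm{exact}}$ in \eqref{eq:kappa_exact} is well defined and strictly positive, since $\eta\in(0,1)$ gives $\ln(\eta/2)<0$.

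It then remains to apply Hoeffding's inequality with $\kappa=\kappa_{\mathrm{exact}}$ and simplify: squaring \eqref{eq:kappa_exact} gives $2N\kappa_{\mathrm{exact}}^2/\tau_{\mathrm{max}}^2=-\ln(\eta/2)$, so the bound becomes $2\exp\!\bigl(-2N\kappa_{\mathrm{exact}}^2/\tau_{\mathrm{max}}^2\bigr)=2\exp\!\bigl(\ln(\eta/2)\bigr)=\eta$, which is the claim. Equivalently, $\kappa_{\mathrm{exact}}$ is nothing but the solution of $2e^{-2N\kappa^2/\tau_{\mathrm{max}}^2}=\eta$ in $\kappa$, so this last step is a one-line computation.

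I expect the only genuine obstacle to be the independence argument: Hoeffding's inequality as stated requires i.i.d.\ summands, and the justification rests entirely on the reset-to-zero mechanism together with the Markov property of the OU process under the perfect-model assumption --- once that is granted, the theorem follows directly. (Under model mismatch the $\tau_i$ need no longer share the mean $\mathbb{E}[\tau]$, and possibly not even the distribution; that mismatch is exactly what the trigger is meant to detect, and the approximate trigger of the next subsection handles the case where $\mathbb{E}[\tau]$ itself must be estimated rather than computed.)
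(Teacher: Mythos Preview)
Your proposal is correct and follows exactly the paper's approach: the paper's proof is a single sentence stating that substituting \eqref{eq:kappa_exact} into the right-hand side of Hoeffding's inequality yields the result. You supply more detail than the paper does (the i.i.d.\ justification via the reset-to-zero mechanism and the Markov property, the boundedness via the enforced $\tau_{\mathrm{max}}$, and the explicit algebra), but the argument is identical in substance.
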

\vspace{1ex} 
\begin{proof}
	Substituting \eqref{eq:kappa_exact} for $\kappa_{\mathrm{exact}}$ into the right hand side of Hoeffding's inequality yields the desired result.
\end{proof}

The theorem quantifies the expected convergence rate of the empirical mean to the expected value for a perfect model.
This result can be used as follows: the user specifies the desired confidence level $\eta$, the number $N$ of inter-communication times considered in the empirical average, and the maximum inter-communication time $\tau_{\mathrm{max}}$.  By choosing $\kappa_{\mathrm{exact}}$ as in \eqref{eq:kappa_exact}, the exact learning trigger \eqref{trigger} is guaranteed to make incorrect triggering decisions (false positives) with a probability of less than $\eta$.

\subsection{Approximated learning trigger}
As discussed in \sect \ref{sec:stoppingTimes}, obtaining $\mathbb{E}[\tau]$ can be difficult and computationally expensive. Instead of solving the PDE \eqref{PDE}, we propose to approximate $\mathbb{E}[\tau]$ by sampling $\tau$. For this, we simulate the OU process $Z(t)$ \eqref{ctstimepred} until it reaches a sphere with radius $\delta$ for $M$ times, and average the obtained stopping times $\tau^{\text{sim}}_1,\ldots, \tau^{\text{sim}}_M$. Hence,
\begin{equation}
	\mathbb{E}[\tau] \approx \frac{1}{M}\sum\limits_{i=1}^M \tau^{\text{sim}}_i.
	\label{mean}
\end{equation}
This yields the \emph{approximated learning trigger}
\begin{equation}
\gamma_{\mathrm{learn}} = 1 \iff	\left| \frac{1}{N}\sum\limits_{i=1}^N \tau_i - \frac{1}{M}\sum\limits_{i=1}^M \tau^{\text{sim}}_i \right| \geq \kappa_{\mathrm{approx}}.
\label{approxtrigger}
\end{equation}

This approximation leads to a different choice of $\kappa_{\mathrm{approx}}$.
\begin{theo}[Approximated learning trigger] 
\label{thm:approxTrigger}
Let the parameters $\eta$, $N$, $M>N$, and $\tau_{\mathrm{max}}$ be given, and assume $(\hat{A}_\mathrm{CL}, \hat{B}, \hat{\Sigma}) = (A_\mathrm{CL}, B, \Sigma)$. 
For
\begin{equation}
\kappa_{\mathrm{approx}} = \tau_{\mathrm{max}}\sqrt{-\frac{2}{N} \ln \frac{\eta}{4} },
\end{equation}
we have
\begin{equation}
		 \mathbb{P}\left[\left|\frac{1}{N}\sum\limits_{i=1}^N \tau_i - \frac{1}{M}\sum\limits_{i=1}^M \tau^{\text{sim}}_i \right| \geq \kappa_{\mathrm{approx}} \right] < \eta.
\end{equation}
\end{theo}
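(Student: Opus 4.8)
The plan is to reduce the claim to two separate applications of Hoeffding's inequality by inserting the exact expected value $\mathbb{E}[\tau]$ and splitting with the triangle inequality. Under the perfect-model assumption $(\hat A_\mathrm{CL}, \hat B, \hat\Sigma) = (A_\mathrm{CL}, B, \Sigma)$, the observed inter-communication times $\tau_1,\ldots,\tau_N$ are i.i.d.\ with mean $\mathbb{E}[\tau]$ and values in $[0,\tau_{\mathrm{max}}]$ (the upper bound enforced by triggering at the latest at $\tau_{\mathrm{max}}$, exactly as in Theorem~\ref{thm:exactTrigger}). Because the simulated paths $Z(t)$ used in \eqref{mean} are generated from the same OU dynamics \eqref{ctstimepred}, the simulated stopping times $\tau^{\text{sim}}_1,\ldots,\tau^{\text{sim}}_M$ are also i.i.d.\ with mean $\mathbb{E}[\tau]$ and range $[0,\tau_{\mathrm{max}}]$, and independent of the $\tau_i$.

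First I would write, for any realization,
\[
\left| \frac{1}{N}\sum_{i=1}^N \tau_i - \frac{1}{M}\sum_{i=1}^M \tau^{\text{sim}}_i \right|
\;\le\; \left| \frac{1}{N}\sum_{i=1}^N \tau_i - \mathbb{E}[\tau] \right|
+ \left| \frac{1}{M}\sum_{i=1}^M \tau^{\text{sim}}_i - \mathbb{E}[\tau] \right| ,
\]
so that the event $\bigl\{\,\bigl|\tfrac1N\sum\tau_i - \tfrac1M\sum\tau^{\text{sim}}_i\bigr|\ge\kappa_{\mathrm{approx}}\,\bigr\}$ is contained in the union of $\bigl\{\,\bigl|\tfrac1N\sum\tau_i-\mathbb{E}[\tau]\bigr|\ge\kappa_{\mathrm{approx}}/2\,\bigr\}$ and $\bigl\{\,\bigl|\tfrac1M\sum\tau^{\text{sim}}_i-\mathbb{E}[\tau]\bigr|\ge\kappa_{\mathrm{approx}}/2\,\bigr\}$. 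A union bound then bounds the probability of interest by the sum of the two marginal deviation probabilities. Applying Hoeffding's inequality (the preceding lemma) to each term with threshold $\kappa_{\mathrm{approx}}/2$ yields the bounds $2\exp\!\bigl(-N\kappa_{\mathrm{approx}}^2/(2\tau_{\mathrm{max}}^2)\bigr)$ and $2\exp\!\bigl(-M\kappa_{\mathrm{approx}}^2/(2\tau_{\mathrm{max}}^2)\bigr)$. Since $M>N$ and the exponential is strictly decreasing, the second term is strictly smaller than the first, so the sum is strictly below $4\exp\!\bigl(-N\kappa_{\mathrm{approx}}^2/(2\tau_{\mathrm{max}}^2)\bigr)$.

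It then remains to substitute $\kappa_{\mathrm{approx}} = \tau_{\mathrm{max}}\sqrt{-\tfrac{2}{N}\ln\tfrac{\eta}{4}}$, which makes this last expression equal to $4\cdot\tfrac{\eta}{4}=\eta$; combined with the strict inequality from the $M>N$ step, this gives the desired $<\eta$.

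I do not expect a serious obstacle here: the argument is routine given Hoeffding's inequality. The two points that need care are (i) the factor $\tfrac12$ in the symmetric split, which effectively halves the sample size and is precisely why $\kappa_{\mathrm{approx}}$ is about twice $\kappa_{\mathrm{exact}}$, and (ii) justifying that the simulated stopping times $\tau^{\text{sim}}_i$ genuinely share the distribution of $\tau$ — this is where the perfect-model assumption and the discrete/continuous-time equivalence noted in \sect \ref{sec:OUprocess} enter. A minor subtlety, already implicit in Theorem~\ref{thm:exactTrigger}, is that the $\tau_i$ are only idealized as i.i.d.\ (successive intervals are independent because $Z(t)$ is reset to $0$ at each trigger, modulo the $\tau_{\mathrm{max}}$ cap and moving-horizon windowing); I would simply flag that the same idealization underlies the use of Hoeffding's inequality throughout.
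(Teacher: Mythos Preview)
Your proposal is correct and follows essentially the same route as the paper: insert $\mathbb{E}[\tau]$, split via the triangle inequality, apply a union bound, and invoke Hoeffding's inequality on each half with threshold $\kappa_{\mathrm{approx}}/2$, using $M>N$ to obtain the strict inequality. Your additional remarks on the i.i.d.\ idealization and the role of the perfect-model assumption are accurate and more explicit than the paper's terse proof, but the core argument is identical.
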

\vspace{1ex}

\begin{proof}
We insert $\mathbb{E}[\tau]$, use the triangle inequality, and additivity of probability measures:
\begin{align*}
&{}	&&\mathbb{P}\left[\left|\frac{1}{N}\sum\limits_{i=1}^N \tau_i - \frac{1}{M}\sum\limits_{i=1}^M \tau^{\text{sim}}_i \right| \geq \kappa \right]\\ 
&=    &&\mathbb{P}\left[\left|\frac{1}{N}\sum\limits_{i=1}^N \tau_i - \mathbb{E}[\tau] + \mathbb{E}[\tau] - \frac{1}{M}\sum\limits_{i=1}^M \tau^{\text{sim}}_i \right| \geq \kappa \right]\\ 
&\leq &&\mathbb{P}\left[ \left| \frac{1}{N}\sum\limits_{i=1}^N \tau_i - \mathbb{E}[\tau] \right| \geq \frac{\kappa}{2} \right] + \\ 
&{}  &&\mathbb{P}\left[ \left| \frac{1}{M}\sum\limits_{i=1}^M \tau^{\text{sim}}_i - \mathbb{E}[\tau] \right| \geq \frac{\kappa}{2} \right]\\
&\leq &&\frac{\eta}{2} + \frac{\eta}{2},
\end{align*}
where the last step follows with $M>N$ and Hoeffding's inequality for $\frac{\kappa}{2}$ and $\frac{\eta}{2}$.
\end{proof}
We avoid solving the PDE \eqref{PDE}, but the obtained trigger \eqref{approxtrigger} is less efficient than \eqref{trigger}, \ie the required sample size $N$ for equal accuracy $\eta$ is higher.

\begin{rem}[practical implementation]
In the experiments reported in the next sections, we use trigger \eqref{approxtrigger}, which is easier to compute than \eqref{trigger}.  In order to improve the trigger's robustness, we trigger learning experiments only when \eqref{approxtrigger} holds for a certain duration rather than instantaneous.
This way, the trigger is less prone to unmodeled short-term effects, which also decreases the probability of false positives.
\end{rem}
	\section{Simulation}
\label{sec:simulation}
This section illustrates the proposed event-triggered learning scheme with a numerical example.
%
For this, we consider two agents as in \fig \ref{multiagent} and a one-dimensional process
\begin{equation}
	x(k + 1) = 0.9 x(k) + 0.01 r(k) + \epsilon(k)
\end{equation}
with sample time $T_\text{s} = 0.01\, \text{s}$, reference $r(k) = \cos(0.2k T_\text{s})$, and $\epsilon(k) \sim \mathcal{N}(0, 2.5\cdot10^{-5})$.
To simulate imperfect model information, we assume that only a slightly perturbed model is available for making predictions,
\begin{equation}
	\hat{x}(k + 1) = 0.85 \hat{x}(k) + 0.005 r(k).
	\label{eq:simEx_model}
\end{equation}

We implement all main components of the proposed event-triggered learning scheme as shown in \fig \ref{multiagent}; that is, the ETSE components as described in \sect \ref{sec:problem} with $\delta=0.02$,
model learning as in \sect \ref{sec:modelLearning}, and the approximated learning trigger \eqref{approxtrigger}.
As trigger parameters, we chose a confidence level $\eta = 0.05$, as well as $N = 2000$, $M=10000$, and $\tau_{\mathrm{max}} = 3.5\, \text{s}$. Theorem \ref{thm:approxTrigger} then yields $\kappa_{\mathrm{approx}} \approx 0.23 $.
In order to obtain a robust learning trigger, $N$ in the empirical inter-communication time average, $\frac{1}{N}\sum_{i=1}^{N} \tau_i$, is reset after a learning experiment, and \eqref{approxtrigger} is only checked after having observed at least $N=2000$ stopping times $\tau_i$.

%





   \begin{figure}[tb]
      \centering
  	   \vspace{-3mm} 
			\includegraphics[scale=0.25]{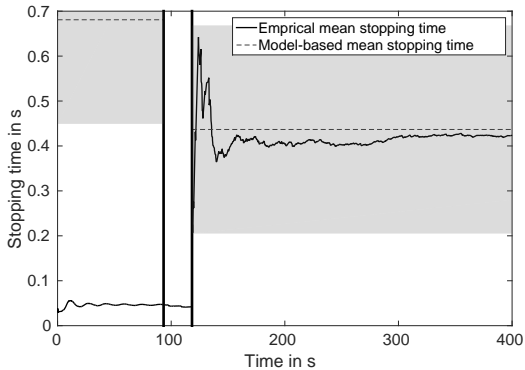}
			\vspace{-4mm}
      \caption{
      Communication and learning behavior for the simulation example. 
      The solid line shows the empirical inter-communication time 
      computed as moving average $\frac{1}{N}\sum_{i=1}^{N} \tau_i$. The dashed line represents the model-based stopping time \eqref{mean} with the triggering interval $\pm \kappa_{\mathrm{approx}}$ in gray. The two vertical lines represent the start and the end of a learning experiment.  After learning, the empirical and model-based stopping times match well. 
}
      \label{sinus}
			\vspace{-4mm}
   \end{figure}

Figure \ref{sinus} illustrates the behavior of the event-triggered estimation and learning system for this numerical example.
The actual communication is captured by the empirical inter-communication time, which is the inverse of the communication rate and computed as the moving average $\frac{1}{N}\sum_{i=1}^{N} \tau_i$, where $N$ is equal to $2000$ or the number of stopping times observed since the start or last learning experiment.
Since the initial model \eqref{eq:simEx_model} is inaccurate, we observe a significant deviation between empirical inter-communication time and model-based one \eqref{mean}.  
At around $90 \, \text{s}$, 2000 inter-communication times $\tau_i$ have been observed, and the trigger \eqref{approxtrigger} is checked.
Because the triggering condition (right-hand side of \eqref{approxtrigger}) is clearly true, a learning experiment is performed (until about $120 \, \text{s}$).
Directly after learning, the moving average of the empirical inter-communication time is reset, hence its fluctuations.
With the updated model, empirical and model-based stopping time coincide well; hence, no further learning experiment is triggered thereafter.


	\section{Physical Experiment}
\label{sec:experiment}
We demonstrate the proposed event-triggered learning approach in experiments on a cart-pole system (see \fig \ref{cartpolesys}), which is a common benchmark in control \cite{Bou13}.  We consider balancing about the upright equilibrium, whose dynamics are approximately linear \eqref{LTI} with position $s$, velocity $\dot{s}$, angle $\theta$, angular velocity $\dot{\theta}$ as its states, and motor voltage as input $u$.  The dynamics are stabilized with a standard controller \eqref{eq:stateFeedback}, and we consider remote state estimation of the stabilized process \eqref{CL} as per the architecture in \fig \ref{multiagent}.

\begin{figure}[tb]
\vspace{1mm}
\centering
\includegraphics[width=2.8cm]{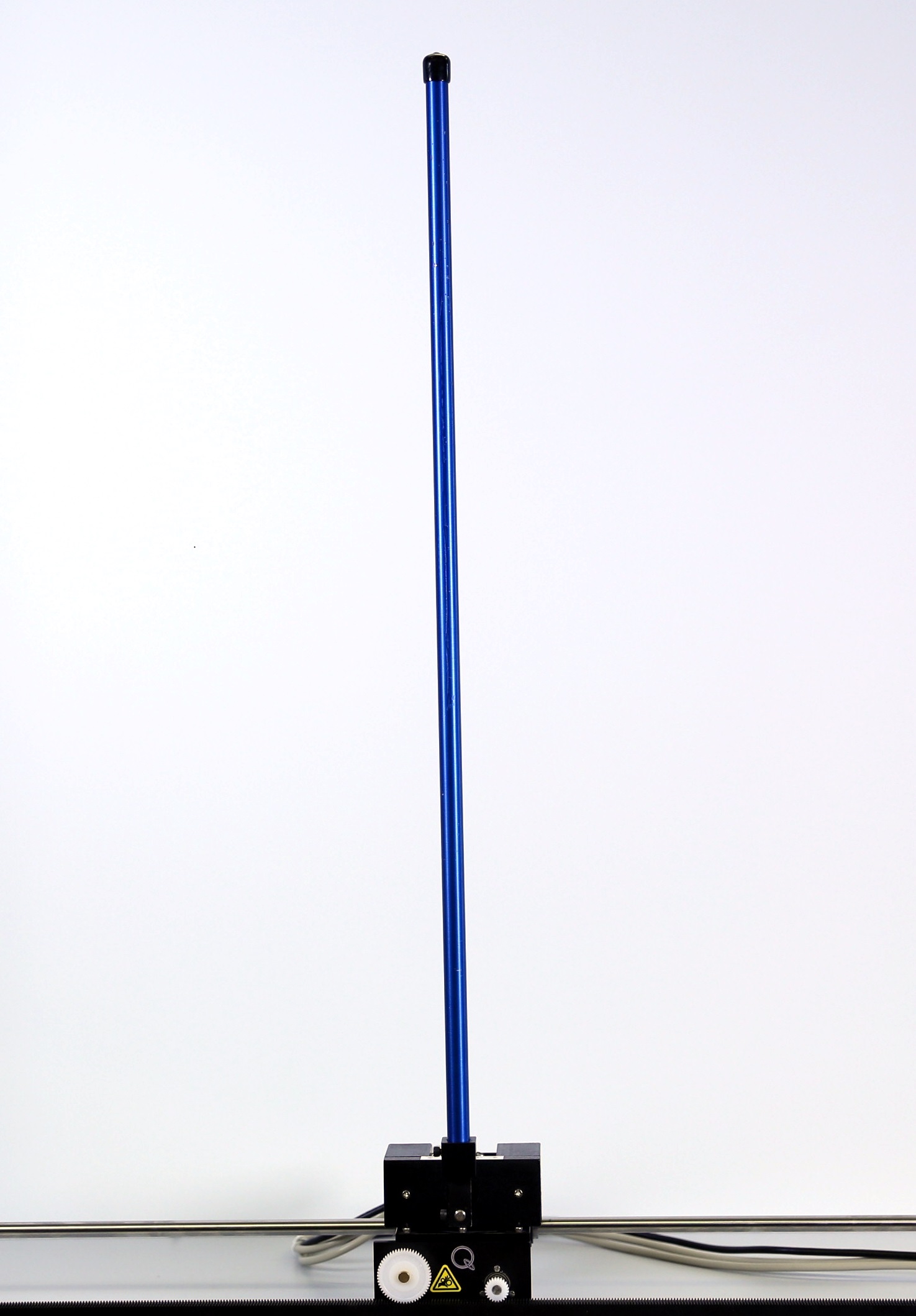} 
\hspace{2mm}
\includegraphics[width=2.8cm]{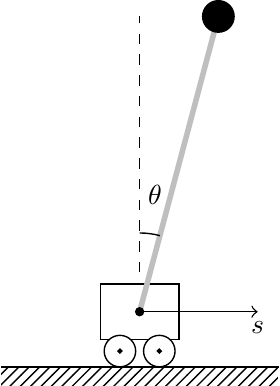}    
	\caption{The event-triggered learning approach is demonstrated in experiments of the depicted cart-pole system.
	}
  \label{cartpolesys}
	\vspace{-4mm}
\end{figure}


For making predictions in ETSE, we initially use a linear model supplied by the manufacturer \cite{Quanser}, and we take $\delta = 0.075$ as triggering threshold.  As the learning trigger, we use \eqref{approxtrigger}, where the empirical mean stopping time is computed over a fixed duration $T = 60 \, \text{s}$, which is an alternative to including a fixed number of samples in the average.  While the physical system does not exactly match the assumption in \sect \ref{sec:OUprocess} (\eg no strictly linear Gaussian dynamics), the theoretical analysis of \sect \ref{sec:eventTriggerLearning} is still helpful in guiding our choice of a triggering threshold $\kappa$. We choose  $\kappa = 2.5$ in the trigger \eqref{approxtrigger}, which is slightly lower then what we would obtain through Theorem \ref{thm:approxTrigger}. Since we additionally enforce the triggering condition to be true for $120 \, \text{s}$, we reduce the risk of false positives. 

Figure \ref{experiment} shows the results of the experiment. 
While the model used for predictions is not perfect, there is still a good match between the empirical and model-based stopping time: the empirical stopping time mostly remains within the gray area and thus no learning is triggered initially.  At $240$s, we add weights on top of the pole (\cf \fig \ref{experiment}), thus changing the system dynamics.  
The empirical inter-communication time drops (\ie indicating more communication) and clearly captures the change in dynamics. 
After the triggering condition being true (the black graph being outside the gray area) for $120 \, \text{s}$, a learning experiment is triggered.
%
A new model is then learned according to \sect \ref{sec:modelLearning} (an open-loop model is identified instead of a closed-loop one, which makes no difference here).
After updating the model, the empirical and model-based stopping times coincide very well. The event-triggered learning thus successfully detected and compensated for the changed dynamics, and reduced communication thereafter.

   \begin{figure}[tb]
   \centering
   \vspace{-3mm} 
			\includegraphics[scale=0.25]{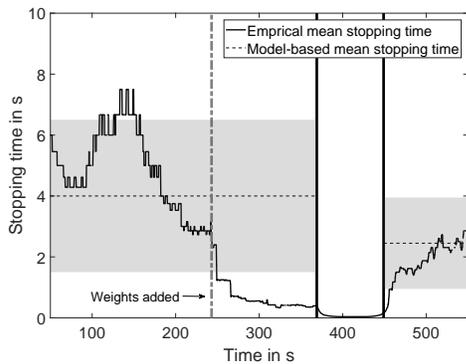}
			\vspace{-4mm}
      \caption{Communication and learning results for the cart-pole experiments. 
      Color coding for the graphs is the same as in \fig \ref{sinus}.
      The vertical dashed line indicates a change of the system dynamics (increasing the pole mass).}
  %
      \label{experiment}
			\vspace{-5mm}
   \end{figure}




	\section{Conclusion}
Event-triggered learning is proposed in this paper as a novel concept to trigger model learning when needed.  The concept is applied to event-triggered state estimation (ETSE) and shown to lead to reduced communication in simulation and physical experiments.  For this setting, we obtained (provably) effective learning triggers by means of statistical stopping time analysis. 

This paper is the first to develop the concept of event-triggered learning. Extending the method to nonlinear dynamic systems
is an obvious next step we plan for future work.  
While event-triggered learning has been motivated as an extension to ETSE, the concept generally addresses the fundamental question of \emph{when to learn}, and thus 
potentially has much wider relevance.  




\section*{Acknowledgment}
The authors thank Felix Grimminger for his support with the cart-pole system, and Alonso Marco for his help with speeding up the learning experiments.

	\bibliographystyle{IEEEtran}
\bibliography{Database}	

\end{document}